\documentclass[11pt]{article}
    
    \usepackage[margin=1in]{geometry}
    \usepackage{amsmath,amssymb,amsthm,mathtools}
    \usepackage{microtype}
    \usepackage{enumitem}
    \usepackage{xspace}
    \usepackage[colorlinks=true,linkcolor=blue,citecolor=blue,urlcolor=blue]{hyperref}
    \usepackage{comment}
    
    \newtheorem{theorem}{Theorem}[section]
    \newtheorem{lemma}[theorem]{Lemma}
    \newtheorem{proposition}[theorem]{Proposition}
    \newtheorem{corollary}[theorem]{Corollary}
    \newtheorem{definition}[theorem]{Definition}

    \newcommand{\F}{\mathbb{F}}
    \newcommand{\E}{\mathbb{E}}
    \newcommand{\Tr}{\operatorname{Tr}}
    \newcommand{\wt}{\operatorname{wt}}
    \newcommand{\poly}{\operatorname{poly}}
    
    \newcommand{\cP}{\mathcal{P}}
    \newcommand{\LPC}{\textnormal{\textsc{LPC}}\xspace}
    \newcommand{\GapMWC}{\textnormal{\textsc{GapMWC}}\xspace}
    
    \newcommand{\Yes}{\textnormal{\textsc{Yes}}\xspace}
    \newcommand{\No}{\textnormal{\textsc{No}}\xspace}
    \newcommand{\ket}[1]{|#1\rangle}
    \newcommand{\bra}[1]{\langle #1|}
    \newcommand{\braket}[2]{\langle #1|#2\rangle}
    
    \newcommand{\BQP}{\textsc{BQP}}
    \newcommand{\Pclass}{\textsc{P}}
    \newcommand{\NP}{\textsc{NP}}
    \newcommand{\QCMA}{\textsc{QCMA}}
    
    \title{Complexity of detecting large coefficients in the Pauli basis}
    \author{Santiago Cifuentes\\
    ICC Conicet, Universidad de Buenos Aires}
    \date{June 2026}
    
\begin{document}
    \maketitle
    
    \begin{abstract}
    We study the problem of deciding, given a mechanism to prepare a quantum state $\rho$ and a value $\varepsilon > 0$, whether there is some non-identity Pauli matrix $P$ such that $|\Tr(P \rho)| \geq \varepsilon$. We consider that the state $\rho$ is described as the result of tracing out some of the qubits of a pure state prepared by a circuit $C$, and we assume the promise that either there is a Pauli matrix satisfying the stated condition or, instead, that for all non-identity Pauli matrices $P$ it is the case that $|\Tr(P\rho)|\leq \varepsilon/2$. The problem is in \QCMA{}, and we prove that if it belongs to \BQP{} then $\NP{} \subseteq \BQP{}$. The result is obtained through a reduction from the minimum-weight code problem, and it holds even when $\rho$ is assumed to be a pure state (i.e. when no qubits are discarded) and $\varepsilon$ is constant. This resolves an open question regarding the existence of efficient tomographic procedures to find the largest coefficients of a quantum state in the Pauli basis: namely, they do not exist under the standard hypothesis $\NP{} \nsubseteq \BQP{}$.
    \end{abstract}
    
    \section{Introduction}
    \label{sec:introduction}
    
    In this paper we address the following problem: given a succinct preparation of a state $\rho$,
    decide whether there exists a non-identity Pauli with expectation value on $\rho$ above some threshold. The
    problem is immediately in QCMA when the threshold has a reasonable size: the classical witness is
    a description of the Pauli matrix, and the verifier estimates the corresponding expectation value by repeated
    Pauli measurements. The main result of this work is that this problem is also, in some sense, $\NP{}$-hard, even when $\rho$ is assumed to be pure and the threshold constant (i.e. independent of the input). More precisely, we show that the existence of a polynomial-time quantum
    algorithm for this problem would imply that $\mathrm{NP}\subseteq\mathrm{BQP}$. Thus, under the common theoretical assumption that $\NP{}\nsubseteq\BQP{}$, we expect that this problem is not in $\BQP{}$.
    
    This problem is closely related to selective and partial forms of quantum tomography, and in previous literature algorithms were developed to estimate specific Pauli coefficients efficiently. In particular, \cite{bendersky2008selective} and~\cite{bendersky2013selective} introduced procedures for estimating parameters of a quantum process in the Pauli basis, and~\cite{bendersky2009general} studied two-copy measurement schemes for finding some of these parameters. More recently, \cite{ciancaglini2026measuring} introduced a hierarchical algorithm for identifying the largest Pauli coefficients of an unknown state, which runs in exponential time in the worst case. This problem was also mentioned recently in the context of Pauli-sampling~\cite{hinsche2025efficient, hangleiter2024bell}, where it was highlighted as an open and relevant problem \cite{hinsche2025efficient}. Other results related to computing expectation values of Pauli matrices show that in restricted scenarios few measurements are enough to recover relevant observables \cite{huang2021efficient}, and more generally, shadow-tomography methods can estimate many prescribed observables using polynomially many copies, although the classical postprocessing may be exponential \cite{aaronson2018shadow, huang2020predicting}. We note that, in general, it is not known whether it is possible to detect the presence of large coefficients in the Pauli basis using a polynomial amount of both quantum and classical resources.
    
    Our proof is obtained through a reduction from a gap version of the minimum-weight nonzero codeword problem~\cite{vardy1997intractability}. Essentially, this problem consists in deciding, given a generator matrix
    $G\in\F_2^{r\times m}$ and a threshold $t$, whether there is some nonzero $x \in \F_2^r$ such that $\wt(xG) \leq t$, where $\wt{}$ denotes the Hamming weight. The core observation for the reduction is that, if $g_1,\ldots,g_m \in \F_2^r$ are the columns of $G$, it holds that
    \[
        \sum_{j=1}^m(-1)^{x\cdot g_j}=m-2\wt(xG).
    \]
    The reduction constructs a density matrix whose relevant Pauli coefficients are monotone functions of $\wt(xG)$ such that a codeword with small Hamming weight corresponds to a large Pauli coefficient.
    
    The rest of the paper is organized as follows. In Section~\ref{sec:preliminaries} we fix notation, define the circuit model for mixed states, and state precisely the two promise problems involved in the reduction. In Section~\ref{sec:basic-lemmas} we prove the elementary identities connecting diagonal quantum states, Pauli coefficients, and Hamming weights of codewords, as well as a small lemma required to extend the proof to the pure case. In Section~\ref{sec:main-results} we prove that \LPC (\textit{Large Pauli Coefficient}) belongs to \QCMA{}, give the reduction from the gap minimum-weight codeword problem to \LPC, and finally show that \LPC{} is also \BQP{}-hard. In Section~\ref{sec:pure state-case} we improve the reduction to prove hardness for pure states. Finally, in Section~\ref{sec:discussion} we provide a conclusion summarizing the main takeaways.
    
    \section{Preliminaries}
    \label{sec:preliminaries}
    
    Let
    \[
        \cP_n=\{I,X,Y,Z\}^{\otimes n}
    \]
    denote the phase-free $n$-qubit Pauli strings. For $x=(x_1,\ldots,x_n)\in\F_2^n$ we let
    \[
        Z^x:=Z^{x_1}\otimes\cdots\otimes Z^{x_n},
    \]
    where $Z^0=I$ and $Z^1=Z$.
    
    We will assume quantum states are described as the result of tracing out part of a pure state obtained through a quantum circuit. We formalize this input model as follows.
    
    \begin{definition}
    A state preparation circuit $C$ is a quantum circuit with all input
    qubits initialized to $\ket{0}$. The circuit has a designated output register $A$ of $n$
    qubits and an auxiliary register $B$. If the final pure state on $AB$ is $\rho_{AB}$, the
    state prepared by $C$ is
    \[
        \rho_C:=\Tr_B(\rho_{AB}),
    \]
    where $\Tr_{B}$ denotes the partial trace over the register $B$.
    \end{definition}
    
    We now define formally the problem we address.
    
    \begin{definition}
    An instance of \LPC (i.e. \textit{Large Pauli Coefficient}) is a pair $(C,\varepsilon)$, where $C$ is a state preparation circuit
    for an $n$-qubit state $\rho_C$, and $\varepsilon\in(0,1]$ is a threshold. The task consists in deciding whether there is some non-identity Pauli matrix whose expectation value over $\rho_C$ has absolute value at least $\varepsilon$, under the promise that if there is no such Pauli matrix then all coefficients have absolute value at most $\varepsilon/2$. More formally, an algorithm that solves the problem must return, given $(C, \varepsilon)$,
    \[
    \begin{array}{lll}
    \Yes & \text{if} & \exists P\in\cP_n\setminus\{I^{\otimes n}\}\text{ such that }
            |\Tr(P\rho_C)|\geq\varepsilon,\\[1mm]
    \No & \text{if} & \forall P\in\cP_n\setminus\{I^{\otimes n}\},\quad
            |\Tr(P\rho_C)|\leq\varepsilon/2.
    \end{array}
    \]
    As usual, we assume that one of the cases holds.
    \end{definition}
    
    We will show that if $\LPC \in \BQP{}$ then $\NP{} \subseteq \BQP{}$. This result is obtained through a reduction from the minimum-weight codeword problem, which we now introduce.
    
    For a binary vector
    $u=(u_1,\ldots,u_m)\in\F_2^m$ we define its Hamming weight as
    \[
        \wt(u):=\bigl|\{j\in\{1,\ldots,m\}:u_j=1\}\bigr|
    \]
    and we denote the inner product between $x,y\in\F_2^r$ as
    \[
        x\cdot y:=\sum_{i=1}^r x_i y_i \pmod 2.
    \]
    We will often omit the multiplication symbol and write $xy$ to denote $x \cdot y$.
    
    Let $G\in\F_2^{r\times m}$ be a binary matrix. We view its row span as a binary linear code
    \[
        \mathcal{C}(G):=\{xG:x\in\F_2^r\}\subseteq\F_2^m.
    \]
    The minimum nonzero codeword weight is
    \[
        \Delta(G):=\min\{\wt(c):c\in \mathcal{C}(G),\ c\neq 0\}.
    \]
    Here we naturally assume that $\mathcal{C}(G) \neq \{0^m\}$. We will consider a promise version of the problem of computing $\Delta(G)$ introduced in~\cite{dumer2003hardness}[Definition 1].
    
    \begin{definition}
    Fix a constant $\lambda>1$. An instance of $\lambda$-\GapMWC is a pair $(G,t)$, where
    $G\in\F_2^{r\times m}$ is a binary matrix and $t$ is a positive integer. An algorithm that solves this problem must return
    \[
    \begin{array}{lll}
    \Yes & \text{if} & \Delta(G)\leq t,\\[1mm]
    \No & \text{if} & \Delta(G)>\lambda t,
    \end{array}
    \]
    assuming that one of the cases holds.
    \end{definition}
    
    If $G$ has full row rank then $xG = 0$ if and only if $x = 0$. Thus, it holds that
    \[
        \Delta(G)=\min_{0\neq x\in\F_2^r}\wt(xG).
    \]
    If $G$ is not full row rank, we can perform Gaussian elimination in polynomial time and replace it with a full row rank matrix
    with the same span. Thus, from now on we assume that the input of \GapMWC is a full row rank, non-trivial matrix. Without loss of generality we will also assume that $1 \leq t <m$.
    
    The exact minimum distance problem for binary linear codes was shown to be NP-hard by
    Vardy~\cite{vardy1997intractability}. Constant-factor gap hardness was proved in~\cite{dumer2003hardness} under randomized reductions\footnote{Strictly speaking, what is proven is that an algorithm solving the gapped version would imply that $\textsc{RP} = \NP$: the promise version itself cannot be $\NP{}$-hard as a language, because $\NP{}$ only contains non-promise problems.}; and deterministic versions of this reduction were obtained in~\cite{cheng2009deterministic} and~\cite{austrin2014simple}. We use the
    following consequence from the literature.
    
    \begin{theorem}[Theorem 1.1 of \cite{austrin2014simple}]
    \label{thm:known-gap}
    There is a $\lambda>1$ and a
    deterministic polynomial-time map $R$ such that, for every Boolean formula $\varphi$,
    $R(\varphi)=(G_\varphi,t_\varphi)$ is a promised instance of $\lambda$-\GapMWC such that
    \[
        \varphi \in \textsc{SAT} \implies \Delta(G_{\varphi})\leq t_{\varphi},
    \]
    and
    \[
        \varphi \notin \textsc{SAT} \implies \Delta(G_{\varphi})>\lambda t_{\varphi}.
    \]
    \end{theorem}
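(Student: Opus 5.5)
Since this is a result imported from \cite{austrin2014simple}, the paper will simply cite it. If I had to prove it, I would follow the Austrin--Khot strategy. The plan has two stages: first get a deterministic reduction with \emph{some} constant gap $\lambda_0>1$, then amplify to any fixed $\lambda$ by tensoring. Both stages need care, and I expect the first one to be the real difficulty.

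\textbf{Starting problem.} I would start from a gap label-cover or gap $3$-LIN instance over $\F_2$. These are deterministically NP-hard by the PCP theorem plus parallel repetition or H{\aa}stad's result.

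The known NP-hard problem for codes is the nearest codeword problem. There one must decide whether a target vector $y$ lies close to $\mathcal{C}(G)$. The standard homogenization puts $y$ as an extra row of the generator matrix, so that $y+\mathcal{C}(G)$ becomes part of a larger code. It then remains to forbid the ``cheating'' codewords of the original code that avoid $y$ and happen to have small weight.

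\textbf{Building the code.} Dumer--Micciancio--Sudan handle the cheating codewords randomly, using locally dense codes. Austrin--Khot instead use the structure of the CSP. For each variable or constraint they place a local Hadamard-type (or long-code) gadget, so that every codeword of the big code decodes locally to an assignment. The soundness argument then splits into two cases for a nonzero codeword.
\begin{enumerate}[label=(\roman*)]
\item The codeword is locally far from the gadget structure on many blocks. Then it has large weight, because Hadamard codes have relative distance $1/2$.
\item The codeword decodes to an assignment on most blocks. Then its weight is controlled by the fraction of violated constraints.
\end{enumerate}
In a \No{} instance every assignment violates a constant fraction of constraints, so both cases give weight $>\lambda_0 t$. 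In a \Yes{} instance, the satisfying assignment yields a codeword of weight $\le t$. The parameters must be balanced, scaling the gadget blocks against the constraint blocks, so that these two thresholds are separated.

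\textbf{Amplification.} I would replace $G$ by $G^{\otimes k}$ for constant $k$. The tensor code satisfies $\Delta(G^{\otimes k})=\Delta(G)^k$, and the tensor of a light codeword gives a light codeword in the \Yes{} case. Hence the gap becomes $\lambda_0^k$ in time $m^{k}$, which is polynomial for constant $k$. Full row rank is preserved, and we may also take $1\le t<m$ as assumed in the paper.

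\textbf{Main obstacle.} The hard step is the soundness case analysis in the first stage. One must show deterministically that \emph{no} nonzero codeword, including ones mixing the homogenizing coordinate with arbitrary combinations of gadget rows, can beat the bound $\lambda_0 t$. This is exactly where the randomized locally dense codes were previously needed. My first attempt would be a Fourier-analytic decoding of each gadget block: a light codeword must correlate with a character on most blocks, and those characters assemble into a good CSP assignment, contradicting the \No{} promise.
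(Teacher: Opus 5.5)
The paper gives no proof of this statement. It is imported verbatim from \cite{austrin2014simple} and used only as a black box, through Corollary~\ref{coro:hardness-of-code}, so your opening sentence is exactly what the paper does. Your second stage is correct: $\Delta(G^{\otimes k})=\Delta(G)^k$, and the $k$-th tensor power of a light codeword has weight at most $t^k$, so $(G^{\otimes k},t^k)$ has gap $\lambda_0^k$. It is, however, unnecessary, since the statement only asserts the existence of \emph{some} $\lambda>1$. Everything therefore rests on your first stage, and that is where the reconstruction has a genuine gap. Its soundness is left as ``my first attempt would be\ldots'', and the mechanism you propose does not work as described.

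Suppose, as you describe, the coordinates are grouped into Hadamard or long-code blocks encoding local message coordinates. Then the restriction of \emph{every} codeword of the linear code to a block is itself a codeword of that block code. That is, it is the table of some character $\chi_S$, with $S$ an arbitrary set of labels rather than a singleton.
\begin{itemize}
\item Case (i) is therefore vacuous. If ``far'' instead means far from a dictator, it costs nothing extra: every nonzero character has weight exactly half the block, so Hamming weight cannot tell a one-label block from a block carrying a large set $S$.
\item Fourier decoding has nothing to exploit. The usual ``pick a random element of $S$'' rule only guarantees consistency of order $1/|S|$, and nothing in your construction bounds $|S|$. So case (ii), ``decodes to an assignment'', is never actually reached.
\item Worse, consider codewords with the homogenizing coordinate equal to $0$ that are nonzero on only a few blocks, for instance characters whose parity projections onto all neighbouring blocks vanish. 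These are light, and nothing in your sketch excludes them.
\end{itemize}
Killing exactly these cheating codewords deterministically is the entire content of the theorem. \cite{dumer2003hardness} does it with randomly constructed locally dense codes, \cite{cheng2009deterministic} with explicit algebraic ones, and \cite{austrin2014simple} with its own elementary construction and soundness lemma. Your proposal names this obstacle but supplies no such ingredient, so it does not prove the statement; in the paper, the citation is what carries it.
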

    
    From now on, we fix $\lambda$ as the value which exists thanks to this result. Theorem~\ref{thm:known-gap} implies that the existence of an algorithm solving $\lambda$-\GapMWC in polynomial time would imply that $\Pclass{} = \NP{}$. Similarly, the existence of a polynomial-time \textit{quantum} algorithm for this problem would imply that $\NP{} \subseteq \BQP{}$. We highlight this in the following corollary.
    
    \begin{corollary}
    \label{coro:hardness-of-code}
    If $\lambda$-\GapMWC$\in\BQP{}$, then $\NP \subseteq \BQP{}$.
    \end{corollary}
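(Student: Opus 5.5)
The corollary follows by composing the deterministic reduction of Theorem~\ref{thm:known-gap} with a hypothetical \BQP{} algorithm for $\lambda$-\GapMWC, which gives a \BQP{} algorithm for \textsc{SAT}. Assume $\mathcal{A}$ is a polynomial-time quantum algorithm deciding $\lambda$-\GapMWC with bounded error, say success probability at least $2/3$ on every promised instance. Given a Boolean formula $\varphi$, we first compute $R(\varphi)=(G_\varphi,t_\varphi)$ classically in time polynomial in $|\varphi|$. Since $R$ runs in polynomial time, the size of $(G_\varphi,t_\varphi)$ is also polynomial in $|\varphi|$. We then run $\mathcal{A}$ on $(G_\varphi,t_\varphi)$ and output its answer.

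For correctness, we use the two implications of Theorem~\ref{thm:known-gap}. They guarantee that $(G_\varphi,t_\varphi)$ always satisfies the promise. If $\varphi\in\textsc{SAT}$, then $\Delta(G_\varphi)\le t_\varphi$, so the instance is a \Yes instance and $\mathcal{A}$ accepts with probability at least $2/3$. If $\varphi\notin\textsc{SAT}$, then $\Delta(G_\varphi)>\lambda t_\varphi$, so the instance is a \No instance and $\mathcal{A}$ rejects with probability at least $2/3$.

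One detail needs care. The paper assumes without loss of generality that inputs to \GapMWC have full row rank and satisfy $1\le t<m$. If $R$ does not output instances in this form, we first apply the Gaussian-elimination preprocessing described earlier; it is deterministic, polynomial-time, and preserves $\Delta$. The conditions $1\le t<m$ are then handled by trivial classical checks. For example, $t\ge m$ gives a \Yes instance, since $\Delta(G)\le m$ for any nontrivial code.

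Running time is polynomial because it is the sum of two polynomials, with the second composed with a polynomial. Hence $\textsc{SAT}\in\BQP{}$. Since \BQP{} is closed under deterministic polynomial-time many-one reductions and \textsc{SAT} is \NP{}-complete, this gives $\NP\subseteq\BQP{}$. There is no real obstacle here; the only point to check is that the reduction lands inside the promise in both cases, and the two implications of Theorem~\ref{thm:known-gap} give this directly.
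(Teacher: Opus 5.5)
Your proposal is correct and matches the paper's proof, which likewise composes the deterministic reduction $R$ from Theorem~\ref{thm:known-gap} with the assumed \BQP{} algorithm for $\lambda$-\GapMWC to decide \textsc{SAT}. Your remarks on the full-rank preprocessing and the $t\ge m$ case are harmless details that the paper leaves implicit.
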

    
    \begin{proof}
    To solve \textsc{SAT} in \BQP{} we can apply the reduction from Theorem~\ref{thm:known-gap} and then the algorithm that solves $\lambda$-\GapMWC.
    \end{proof}
    
    Our strategy to prove that $\LPC \in \BQP{}$ implies $\NP{} \subseteq \BQP{}$ consists simply in reducing $\lambda$-\GapMWC to \LPC. Then, the existence of a quantum algorithm to solve \LPC would imply, through Corollary~\ref{coro:hardness-of-code}, that $\NP{} \subseteq \BQP{}$.
    
    \section{Basic lemmas}
    \label{sec:basic-lemmas}
    
    The circuit constructed by the reduction will be a randomized \textit{classical} circuit obtained by sampling the columns of $G$. The following lemma describes the basic properties of this type of construction.
    
    \begin{lemma}
    \label{lem:diagonal-pauli}
    Let $V$ be a random variable over $\F_2^n$, and let
    \[
        \rho_V:=\E_V[\ket{V}\bra{V}]
    \]
    be the corresponding diagonal density matrix. Then, for every $x\in\F_2^n$,
    \[
        \Tr(Z^x\rho_V)=\E_V[(-1)^{x\cdot V}].
    \]
    Moreover, if $P\in\cP_n$ contains at least one $X$ or $Y$, then
    \[
        \Tr(P\rho_V)=0.
    \]
    \end{lemma}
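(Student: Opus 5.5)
By linearity of the trace and of expectation, $\Tr(P\rho_V)=\E_V[\Tr(P\ket{V}\bra{V})]=\E_V[\bra{V}P\ket{V}]$ for any $P\in\cP_n$. Both claims therefore reduce to computing the diagonal matrix element $\bra{v}P\ket{v}$ for a fixed computational basis vector $v\in\F_2^n$, and then averaging over $V$.

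For the first claim, take $P=Z^x$. The operator is a tensor product and $\ket{v}=\ket{v_1}\otimes\cdots\otimes\ket{v_n}$, so the matrix element factorizes:
\[
\bra{v}Z^x\ket{v}=\prod_{i=1}^n\bra{v_i}Z^{x_i}\ket{v_i}.
\]
Since $Z\ket{b}=(-1)^b\ket{b}$, each factor equals $(-1)^{x_iv_i}$. The product is $(-1)^{\sum_i x_iv_i}=(-1)^{x\cdot v}$, where the exponent may be reduced mod $2$. Taking the expectation over $V$ gives the first identity.

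For the second claim, suppose $P=P_1\otimes\cdots\otimes P_n$ with some $P_k\in\{X,Y\}$. The matrix element again factorizes as $\prod_i\bra{v_i}P_i\ket{v_i}$. Now $X\ket{b}=\ket{1-b}$ and $Y\ket{b}=\pm i\ket{1-b}$, so $\bra{v_k}P_k\ket{v_k}=0$. Hence $\bra{v}P\ket{v}=0$ for every $v$, and the expectation vanishes. Equivalently, $P$ maps each basis vector to a multiple of a different basis vector, so it has zero diagonal, while $\rho_V$ is diagonal, which forces $\Tr(P\rho_V)=0$.

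There is no real obstacle here. The only points needing care are the factorization of the tensor-product matrix element and the observation that the phase of $Y$ is irrelevant, since that diagonal entry is zero anyway.
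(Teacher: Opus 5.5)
Your proposal is correct and follows essentially the same route as the paper's proof: both reduce to the diagonal entry $\bra{v}P\ket{v}$ via linearity, use that $\ket{v}$ is a $Z^x$-eigenvector with eigenvalue $(-1)^{x\cdot v}$, and use that any $X$ or $Y$ factor flips a bit, making $P\ket{v}$ orthogonal to $\ket{v}$. Your explicit tensor-factorization of the matrix element is only a slightly more detailed write-up of the same argument.
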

    
    \begin{proof}
    For any computational-basis string $y\in\F_2^n$, the vector $\ket{y}$ is an eigenvector of
    $Z^x$, with eigenvalue
    \[
        (-1)^{x_1y_1}\cdots(-1)^{x_ny_n}=(-1)^{x\cdot y}.
    \]
    Therefore
    \[
    \begin{aligned}
        \Tr(Z^x\rho_V)
          &=\E_V\bigl[\Tr(Z^x\ket{V}\bra{V})\bigr]  \\
          &=\E_V\bigl[\bra{V}Z^x\ket{V}\bigr]       \\
          &=\E_V[(-1)^{x\cdot V}].
    \end{aligned}
    \]
    Regarding the second statement, if $P$ contains an $X$ or a $Y$ then $P$ flips at least one computational-basis
    bit, and hence $P\ket{y}$ is orthogonal to $\ket{y}$, for every $y$. This implies that $\Tr(P\rho_V)=\E_V[\bra{V}P\ket{V}]=0$.
    \end{proof}
    
    The following lemma indicates how we can use expressions of the form $\E_V[(-1)^{x\cdot V}]$ to estimate the Hamming weight of a codeword.
    
    \begin{lemma}
    \label{lem:bias-weight}
    Let $G\in\F_2^{r\times m}$ and let $g_1,
    \ldots,g_m\in\F_2^r$ be its columns. Then, for every $x\in\F_2^r$,
    \[
        \sum_{j=1}^m(-1)^{x\cdot g_j}=m-2\wt(xG).
    \]
    More generally, if $\widetilde g_1,\ldots,\widetilde g_N$ are obtained by appending
    $N-m$ zero columns to $G$, then
    \[
        \sum_{j=1}^N(-1)^{x\cdot \widetilde g_j}=N-2\wt(xG).
    \]
    \end{lemma}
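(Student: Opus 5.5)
The plan is to compute each term of the sum directly from the definition of the codeword $xG$, and then split the sum according to the value of that term. The key observation is that the $j$-th coordinate of $xG$ is exactly $x\cdot g_j$. This holds because $(xG)_j=\sum_{i=1}^r x_i G_{ij}$ over $\F_2$, and $G_{ij}$ is the $i$-th entry of the column $g_j$.

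With this in hand, each summand $(-1)^{x\cdot g_j}$ equals $+1$ when $(xG)_j=0$ and $-1$ when $(xG)_j=1$. We therefore partition $\{1,\ldots,m\}$ into the index set $S=\{j:(xG)_j=1\}$, which by definition has $|S|=\wt(xG)$, and its complement, which has $m-\wt(xG)$ elements. Summing gives
\[
\sum_{j=1}^m(-1)^{x\cdot g_j}=(m-\wt(xG))-\wt(xG)=m-2\wt(xG).
\]

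For the padded version, we let $\widetilde G\in\F_2^{r\times N}$ be the matrix with columns $\widetilde g_1,\ldots,\widetilde g_N$, where $\widetilde g_j=g_j$ for $j\le m$ and $\widetilde g_j=0$ otherwise. Applying the first part to $\widetilde G$ gives $N-2\wt(x\widetilde G)$. It then suffices to note that $x\widetilde G$ is $xG$ followed by $N-m$ zeros, because $x\cdot 0=0$. Hence $\wt(x\widetilde G)=\wt(xG)$. Equivalently, one can see this by observing that each appended column contributes exactly $(-1)^0=1$ to the sum.

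There is no real obstacle here. The only point requiring care is identifying the coordinates of $xG$ with the inner products $x\cdot g_j$ of $x$ with the columns of $G$ (rather than its rows), consistent with the convention that $x$ acts on the left of $G$.
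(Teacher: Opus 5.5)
Your proof is correct and follows essentially the same argument as the paper: identify $(xG)_j = x\cdot g_j$, then count the $+1$ and $-1$ contributions. For the padded case the paper just notes that each zero column adds $+1$. Your main phrasing applies the first part to $\widetilde G$ and uses $\wt(x\widetilde G)=\wt(xG)$, which is an equivalent rephrasing, and you mention the paper's version as well.
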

    
    \begin{proof}
    The $j$-th coordinate of the row-vector codeword $xG$ is
    \[
        (xG)_j=x\cdot g_j\pmod 2.
    \]
    Thus
    \[
        (-1)^{x\cdot g_j}=(-1)^{(xG)_j}.
    \]
    If $(xG)_j=0$, this contribution is $+1$, while if $(xG)_j=1$ this contribution is $-1$. The
    number of indices $j$ for which $(xG)_j=1$ is exactly $\wt(xG)$, and the number of indices for
    which $(xG)_j=0$ is $m-\wt(xG)$. Hence
    \[
    \begin{aligned}
        \sum_{j=1}^m(-1)^{x\cdot g_j}
          &=\#\{j:(xG)_j=0\}-\#\{j:(xG)_j=1\}\\
          &=(m-\wt(xG))-\wt(xG)\\
          &=m-2\wt(xG).
    \end{aligned}
    \]
    If $N-m$ zero columns are appended, each padded zero column contributes
    $(-1)^{x\cdot 0^r}=1$. Therefore the sum increases by $N-m$, giving
    \[
        (m-2\wt(xG))+(N-m)=N-2\wt(xG).
    \]
    \end{proof}
    
    For the proof of the pure state case we will require the existence of an unitary operator $U$ able to separate the states $\ket{0^c}$ and $\ket{10^{c-1}}$ in the sense that for all Pauli matrices $P$ over $c$ qubits it holds that $|\bra{x0^{c-1}} U^{\dagger}P U\ket{y0^{c-1}}|$ is bounded for every $x,y \in \{0, 1\}$. We use concentration arguments to justify the existence of such $U$. 
    
    \begin{lemma}
    \label{lem:pauli-expectation-concentration}
    There is a universal constant $c_0>0$ such that, for any fixed traceless Hermitian operator $A$ over $\mathbb C^D$ with $\|A\|\leq 1$, it holds that, if $\ket{\psi}$ is distributed according to the Haar measure on the unit sphere of $\mathbb C^D$, then
    \[
        \Pr\left[\left|\bra{\psi}A\ket{\psi}\right|>\delta\right]
          \leq 2\exp(-c_0D\delta^2)
    \]
    for every $\delta>0$.
    \end{lemma}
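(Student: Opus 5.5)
This is a concentration-of-measure statement, and I will obtain it from Lévy's lemma on the sphere. Identify the unit sphere of $\mathbb C^D$ with the real sphere $S^{2D-1}\subseteq\mathbb R^{2D}$, and define $f(\psi):=\bra{\psi}A\ket{\psi}$. This function is real-valued because $A$ is Hermitian. The plan has three steps: compute the mean of $f$, bound its Lipschitz constant, and invoke Lévy's lemma in the form
\[
\Pr\bigl[|f-\E f|>\delta\bigr]\leq 2\exp\!\left(-\frac{c\,(2D)\,\delta^2}{L^2}\right),
\]
where $c>0$ is an absolute constant and $L$ is the Lipschitz constant of $f$ with respect to the Euclidean metric. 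Some references state the bound with a shifted dimension such as $2D-1$ or $2D+2$. These variants differ by at most constant factors, since $D\geq 1$, so they can be absorbed into $c_0$.

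\emph{Mean.} By unitary invariance of the Haar measure, $\E[\ket{\psi}\bra{\psi}]$ commutes with every unitary. Since it has trace one, it must equal $I/D$. Hence
\[
\E f=\Tr(A)/D=0,
\]
because $A$ is traceless. So concentration around the mean is exactly the claimed bound.

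\emph{Lipschitz constant.} For unit vectors $\psi,\phi$, I write
\[
\bra{\psi}A\ket{\psi}-\bra{\phi}A\ket{\phi}=\bra{\psi}A(\ket{\psi}-\ket{\phi})+(\bra{\psi}-\bra{\phi})A\ket{\phi}.
\]
Applying Cauchy–Schwarz to each term, together with $\|A\|\leq 1$, gives
\[
|f(\psi)-f(\phi)|\leq 2\|\psi-\phi\|.
\]
So $L\leq 2$. The complex Euclidean norm coincides with the real Euclidean norm on $\mathbb R^{2D}$, so this constant transfers directly to the real sphere.

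\emph{Conclusion.} Substituting $L\leq 2$ and $\E f=0$ into Lévy's lemma yields
\[
\Pr[|f|>\delta]\leq 2\exp(-c\,2D\,\delta^2/4).
\]
Taking $c_0=c/2$ gives the statement.

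The main obstacle is only a matter of bookkeeping: fixing which normalization of Lévy's lemma is used (median versus mean, and the exact dimension factor). If the available version concentrates around the median $M_f$ rather than the mean, I would first show $|M_f-\E f|=O(1/\sqrt D)$ by integrating the tail bound. For $\delta\gtrsim 1/\sqrt D$, this shift can be absorbed by replacing $\delta$ with $\delta/2$ and adjusting $c_0$. For $\delta\lesssim 1/\sqrt D$, the right-hand side $2\exp(-c_0D\delta^2)$ is at least $1$ once $c_0$ is chosen small enough, so the bound holds trivially. Either way the result holds with a universal constant $c_0$.
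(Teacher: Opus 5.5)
Your proposal is correct and takes the same route as the paper. The paper applies L\'evy's lemma (Lemma 3 of Hayden--Leung--Winter) to $f(\psi)=\bra{\psi}A\ket{\psi}$, using that its mean is $\Tr(A)/D=0$ and its Lipschitz constant is bounded; you also correctly supply the details the paper leaves implicit: the Schur-lemma computation $\E[\ket{\psi}\bra{\psi}]=I/D$, the explicit bound $L\leq 2$, and the bookkeeping for dimension shifts and median-versus-mean conventions.
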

    
    \begin{proof}
    This is the standard form of Levy's lemma~\cite{ledoux2001concentration}. More precisely, the inequality can be obtained by applying Lemma 3 from~\cite{hayden2006aspects} with the function
    \[
        f(\psi):=\bra{\psi}A\ket{\psi}
    \]
    which has mean $\Tr(A)/D=0$ and a bounded Lipschitz constant.
    \end{proof}
    
    The previous result is intuitive: it states that for any traceless operator $A$ the states $\ket{\psi}$ and $A\ket{\psi}$ have small overlap for most choices of $\ket{\psi}$. We employ this lemma in the following result.
    
    \begin{lemma}
    \label{lem:pauli-flat-encoding}
    For every constant $\eta>0$, there exist a constant $c$ and an unitary $U$ acting on $(\mathbb{C}^2)^{\otimes c}$ such that
    \begin{align}
        |\bra{x 0^{c-1}}U^{\dagger}PU\ket{y0^{c-1}}| \leq \eta
    \end{align}
    for every $x,y \in \{0, 1\}$ and non-identity Pauli matrix $P$ on $c$ qubits.
    \end{lemma}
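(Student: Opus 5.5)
The plan is a probabilistic existence argument: pick $U$ as a Haar-random unitary on $D=2^c$ dimensions, so that $\ket{\psi_0}:=U\ket{0^c}$ and $\ket{\psi_1}:=U\ket{10^{c-1}}$ are a Haar-random orthonormal pair. Then use Lemma~\ref{lem:pauli-expectation-concentration} together with a union bound over the $4^c-1$ non-identity Pauli matrices to show that, for $c$ large enough, some $U$ meets all the required bounds at once. The required quantities split into two families. The diagonal terms are $\bra{\psi_x}P\ket{\psi_x}$ for $x\in\{0,1\}$. The off-diagonal terms are $\bra{\psi_0}P\ket{\psi_1}$ and $\bra{\psi_1}P\ket{\psi_0}$; these are complex conjugates of each other since $P$ is Hermitian, so it suffices to bound one of them.

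The diagonal terms follow directly. Each $\ket{\psi_x}$ is individually Haar-distributed, and every non-identity Pauli $P$ is traceless and Hermitian with $\|P\|=1$. Lemma~\ref{lem:pauli-expectation-concentration} therefore gives $\Pr[|\bra{\psi_x}P\ket{\psi_x}|>\eta]\leq 2\exp(-c_0 2^c\eta^2)$.

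For the off-diagonal term I reduce to the diagonal case by polarization. Define $\ket{\phi_\pm}:=(\ket{\psi_0}\pm\ket{\psi_1})/\sqrt2$ and $\ket{\chi_\pm}:=(\ket{\psi_0}\pm i\ket{\psi_1})/\sqrt2$. Each of these is $U$ applied to a fixed unit vector, so each is again Haar-distributed. Expanding gives
\[
\bra{\phi_+}P\ket{\phi_+}-\bra{\phi_-}P\ket{\phi_-}=2\,\mathrm{Re}\bra{\psi_0}P\ket{\psi_1},
\]
and the same combination of the $\ket{\chi_\pm}$ gives $2\,\mathrm{Im}\bra{\psi_0}P\ket{\psi_1}$, up to sign. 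If all four of these expectation values are at most $\eta/2$ in absolute value, then the real and imaginary parts are each at most $\eta/2$, so $|\bra{\psi_0}P\ket{\psi_1}|\leq \eta/\sqrt2\leq\eta$.

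So for each $P$ there are six Haar-random states, $\ket{\psi_0},\ket{\psi_1},\ket{\phi_\pm},\ket{\chi_\pm}$, whose expectation values must be controlled, and for each I apply Lemma~\ref{lem:pauli-expectation-concentration} with $\delta=\eta/2$. The union bound then gives
\[
\Pr[\text{some bound fails}]\leq 6\cdot 4^c\cdot 2\exp\!\left(-c_0 2^c\eta^2/4\right).
\]
Since $2^c$ grows much faster than $\log(4^c)=O(c)$, this is below $1$ for some constant $c$ depending only on $\eta$, so a suitable $U$ exists.

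The main obstacle I anticipate is the off-diagonal term. Lemma~\ref{lem:pauli-expectation-concentration} only controls quadratic forms $\bra{\psi}A\ket{\psi}$ of a single Haar vector, and the polarization step is what reduces the cross terms to that form. The point to check is that each polarized vector is exactly Haar-distributed, which holds because it is a fixed unit vector pushed through a Haar-random unitary. The rest is bookkeeping with the union bound.
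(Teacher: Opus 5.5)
Your proposal is correct and follows the paper's proof essentially step for step: a Haar-random $U$, the same six fixed vectors $\ket{0^c},\ket{10^{c-1}}$ and their $(\pm)$, $(\pm i)$ superpositions, Lemma~\ref{lem:pauli-expectation-concentration} applied with $\delta=\eta/2$, a union bound over roughly $6\cdot 4^c$ events, and polarization to control the off-diagonal entries. You write out the polarization identities explicitly, which the paper only invokes by name, and this gives the slightly sharper bound $\eta/\sqrt{2}$.
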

    
    \begin{proof}
    Write
    \[
        \ket{e_0}:=\ket{0^c},
        \qquad
        \ket{e_1}:=\ket{10^{c-1}}.
    \]
    Consider the following set of vectors in $(\mathbb C^2)^{\otimes c}$:
    \[
    \begin{aligned}
        \mathcal S:=\{&\ket{e_0},\ket{e_1},
        (\ket{e_0}+\ket{e_1})/\sqrt2,
        (\ket{e_0}-\ket{e_1})/\sqrt2,\\
        &(\ket{e_0}+i\ket{e_1})/\sqrt2,
        (\ket{e_0}-i\ket{e_1})/\sqrt2\}.
    \end{aligned}
    \]
    Let $U$ be a Haar-random unitary on $(\mathbb C^2)^{\otimes c}$. For every fixed vector
    $\ket{\alpha}\in\mathcal S$, the random vector $U\ket{\alpha}$ is Haar-distributed on the unit sphere. Hence, for every fixed non-identity Pauli string $P$ on $c$ qubits, Lemma~\ref{lem:pauli-expectation-concentration} applied to the random vector $U\ket{\alpha}$ gives
    \[
        \Pr{}_U\left[
        \left|\bra{\alpha}U^\dagger P U\ket{\alpha}\right|>\eta/2
        \right]
          \leq 2\exp(-c_0\eta^2 2^c/4).
    \]
    We now take a union bound over the six fixed vectors in $\mathcal S$ and over the $4^c-1$ non-identity Pauli strings over $c$ qubits. The probability that at least one of the inequalities fails is at most $6(4^c-1)\cdot 2\exp(-c_0\eta^2 2^c/4)$, and for a sufficiently large constant $c$ this quantity is strictly smaller than one. Thus, there exists an unitary $U$ such that, simultaneously for every $\ket{\alpha}\in\mathcal S$ and every non-identity Pauli string $P$,
    \[
        \left|\bra{\alpha}U^\dagger P U\ket{\alpha}\right|\leq \eta/2.
    \]
    Fix such an unitary $U$. By definition, it holds that
    \[
        \left|\bra{e_a}U^\dagger P U\ket{e_a}\right|
          \leq \eta/2<\eta
    \]
    for $a \in \{0, 1\}$. For the non-diagonal terms, define
    \[
        \ket{\overline 0}:=U\ket{e_0},
        \qquad
        \ket{\overline 1}:=U\ket{e_1}.
    \]
    By linearity of $U$, the fixed vectors
    $(\ket{e_0}\pm\ket{e_1})/\sqrt2$ are mapped to
    \[
        \ket{\overline \pm}:=\frac{\ket{\overline 0}\pm\ket{\overline 1}}{\sqrt2},
    \]
    and the fixed vectors $(\ket{e_0}\pm i\ket{e_1})/\sqrt2$ are mapped to
    \[
        \ket{\overline{\pm i}}:=\frac{\ket{\overline 0}\pm i\ket{\overline 1}}{\sqrt2}.
    \]
    Thus the inequalities already established guarantee that
    \[
        \left|\bra{\overline +}P\ket{\overline +}\right|,
        \left|\bra{\overline -}P\ket{\overline -}\right|,
        \left|\bra{\overline{+i}}P\ket{\overline{+i}}\right|,
        \left|\bra{\overline{-i}}P\ket{\overline{-i}}\right|
        \leq \eta/2.
    \]
    for every non-identity Pauli matrix $P$. Using the polarization identities we can conclude that
    \[
        \left|\bra{\overline 0}P\ket{\overline 1}\right|
          \leq \eta.
    \]
    \end{proof}

    \section{Main results}
    \label{sec:main-results}
    
    We first state the simple observation that \LPC belongs to \QCMA{}.
    
    \begin{proposition}\label{prop:qcma}
    If $\varepsilon = \Omega(1/\poly(|C|))$, then \LPC is in \QCMA{}.
    \end{proposition}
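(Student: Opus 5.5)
The plan is the standard \QCMA{} protocol: the witness is a classical description of a Pauli string, and the verifier estimates that Pauli's expectation value by repeated sampling. Concretely, the witness is a string $w\in\{0,1\}^{2n}$ encoding some $P\in\cP_n$, two bits per qubit for $I,X,Y,Z$. The verifier first rejects if $P=I^{\otimes n}$. Otherwise it runs the following loop $k$ times, with fresh runs of the circuit $C$ each time:
\begin{itemize}[label={}]
\item prepare $\rho_{AB}$ and discard (ignore) register $B$, leaving $\rho_C$ on register $A$;
\item apply single-qubit basis changes on $A$: $H$ on $X$-positions, $HS^\dagger$ on $Y$-positions, nothing elsewhere;
\item measure the qubits in the support of $P$ in the computational basis;
\item record the parity $s_i\in\{\pm1\}$ of the outcomes.
\end{itemize}
By the standard eigenbasis argument, $\E[s_i]=\Tr(P\rho_C)$. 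Ignoring $B$ is justified because measuring only $A$ yields the statistics of $\Tr_B(\rho_{AB})$. The verifier then forms $\hat\mu=\frac1k\sum_i s_i$ and accepts iff $|\hat\mu|\geq 3\varepsilon/4$. The whole procedure is a polynomial-size quantum circuit given $C$ and $k$.

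For the analysis, note that the $s_i$ are i.i.d.\ and bounded in $[-1,1]$. Hoeffding's inequality therefore gives
\[
\Pr\bigl[|\hat\mu-\Tr(P\rho_C)|\geq \varepsilon/4\bigr]\leq 2\exp(-k\varepsilon^2/32).
\]
Choosing $k=\lceil 32\ln 6/\varepsilon^2\rceil$ bounds this failure probability by $1/3$. Since $\varepsilon=\Omega(1/\poly(|C|))$, $k$ is polynomial in the input size.

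Completeness: in the \Yes{} case, take the witness to be the $P\neq I$ with $|\Tr(P\rho_C)|\geq\varepsilon$. Then, except with probability $1/3$, $|\hat\mu|\geq 3\varepsilon/4$ and the verifier accepts.

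Soundness: in the \No{} case, every witness either encodes $I^{\otimes n}$, which is rejected outright, or some $P$ with $|\Tr(P\rho_C)|\leq\varepsilon/2$. For the latter, except with probability $1/3$, $|\hat\mu|<3\varepsilon/4$ and the verifier rejects. This gives the $2/3$ versus $1/3$ acceptance gap.

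There is no real obstacle. The only point needing care is that $k$ stays polynomial, which is exactly where the hypothesis $\varepsilon=\Omega(1/\poly(|C|))$ is used, together with the observation that $n\leq |C|$ keeps the witness length polynomial.
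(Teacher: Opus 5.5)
Your proposal is correct and follows the same approach as the paper: the Pauli string is the classical witness, and the verifier estimates $\Tr(P\rho_C)$ from $O(1/\varepsilon^2)$ Pauli measurements and accepts iff the estimate has absolute value at least $3\varepsilon/4$, with Hoeffding giving the $2/3$ versus $1/3$ gap. Your explicit rejection of the witness $I^{\otimes n}$, whose expectation is always $1$, is a detail the paper's soundness argument needs but does not state, so it is a worthwhile addition.
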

    
    \begin{proof}
    If $(C, \varepsilon)$ is a positive instance of \LPC, then a classical witness for this fact is the Pauli matrix $P\in\cP_n\setminus\{I^{\otimes n}\}$ such that $|\Tr(P\rho_C)|\geq \varepsilon$. The verifier estimates the observable $P$ on copies of $\rho_C$ and checks that the absolute value of the estimate is above a threshold, say $3\varepsilon/4$. Since each Pauli measurement outcome lies in $\{-1,+1\}$, Hoeffding's inequality implies that $O(1/\varepsilon^2)$ repetitions are enough to estimate $\Tr(P\rho_C)$ to additive error at most $\varepsilon/4$ with probability at least $2/3$. 
    
    Therefore, if indeed $|\Tr(P\rho_C)|\geq\varepsilon$, the verifier accepts with probability at least $2/3$. If the instance is negative, then every possible witness satisfies $|\Tr(P\rho_C)|\leq\varepsilon/2$, and the same verifier rejects with probability at least $2/3$. If $\varepsilon = \Omega(1 / \poly(|C|))$, the whole algorithm can be implemented in polynomial time in the size of $C$.
    \end{proof}
    
    We now describe our main reduction from $\lambda$-\GapMWC to \LPC. Let $(G,t)$ be an instance of $\lambda$-\GapMWC, where
    $G\in\F_2^{r\times m}$. As stated before, we can assume without loss of generality that $r \geq 1$, that $G$ is full row rank, and that $1 \leq t < m$.
    
    Choose a power of two $N$ such that
    \[
        N>2\lambda m.
    \]
    For example, we can take
    $N=2^{\lceil\log_2(2\lambda m+1)\rceil}$. Then $N=O(m)$ and
    \[
        0<\frac{2t}{N}<\frac{2\lambda t}{N}<1.
    \]
    Let $g_1,\ldots,g_m\in\F_2^r$ be the columns of $G$. Pad columns
    $\widetilde g_1,\ldots,\widetilde g_N\in\F_2^r$ as
    \[
        \widetilde g_j:=
        \begin{cases}
          g_j, & 1\leq j\leq m,\\
          0^r, & m<j\leq N.
        \end{cases}
    \]
    We will consider the random variable $V_1\in\F_2^r$ obtained by uniformly picking a column from this matrix, i.e.
    \[
        V_1:=\widetilde g_J,
        \qquad
        J\sim\mathrm{Unif}\{1,
        \ldots,N\}.
    \]
    Let $V^{(1)},\ldots,V^{(k)}$ be independent copies of $V_1$, and consider the direct sum over $\F_2$
    \[
        V:=V^{(1)}\oplus\cdots\oplus V^{(k)}.
    \]
    Our reduction will map the matrix $G$ to the state
    \[
        \rho_{G,t}:=\E_V[\ket{V}\bra{V}].
    \]
    
    It remains to choose $k$ and $\varepsilon$. Let
    \[
        a:=\frac{2t}{N},
        \qquad
        b:=\frac{2\lambda t}{N}.
    \]
    Then $0<a<b<1$. Set
    \begin{equation}
    \label{eq:k-choice-v2}
        k:=\left\lceil
           \frac{\ln 2}{\ln\left(\frac{1-a}{1-b}\right)}
        \right\rceil,
    \end{equation}
    and pick as threshold
    \begin{equation}
    \label{eq:epsilon-choice-v2}
        \varepsilon:=(1-a)^k=\left(1-\frac{2t}{N}\right)^k.
    \end{equation}
    The following holds.
    
    \begin{lemma}
    \label{lem:poly-v2}
    The integer $k$ is $O(m)$ and $\varepsilon = \Omega(1)$. Also, the circuit preparing
    $\rho_{G,t}$ has a polynomial-size description.
    \end{lemma}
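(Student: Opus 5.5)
The plan is to prove the three claims separately: first bound $k$, then use that bound to get $\varepsilon=\Omega(1)$, and finally describe the circuit.

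\emph{Bounding $k$.} Write
\[
\ln\frac{1-a}{1-b}=\ln\Bigl(1+\frac{b-a}{1-b}\Bigr).
\]
Since $\ln(1+u)\geq u/(1+u)$, this gives
\[
\ln\frac{1-a}{1-b}\geq \frac{(b-a)/(1-b)}{(1-a)/(1-b)}=\frac{b-a}{1-a}\geq b-a=\frac{2(\lambda-1)t}{N}.
\]
Hence
\[
k\leq 1+\frac{N\ln 2}{2(\lambda-1)t}\leq 1+\frac{N\ln 2}{2(\lambda-1)}.
\]
Here we use $t\geq 1$. Because $\lambda>1$ is a fixed constant and $N=O(m)$, this yields $k=O(m)$.

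\emph{Lower-bounding $\varepsilon$.} We have $\varepsilon=(1-a)^k$, so $\ln\varepsilon = k\ln(1-a)$. The construction gives $a=2t/N$ with $N>2\lambda m>2\lambda t$, so $a<1/\lambda<1$. Hence $a$ is bounded away from $1$ by a constant. We then use $\ln(1-a)\geq -a/(1-a)\geq -\lambda a/(\lambda-1)$. Combining this with the bound on $k$:
\[
k a\leq \Bigl(1+\frac{N\ln 2}{2(\lambda-1)t}\Bigr)\frac{2t}{N}\leq a+\frac{\ln 2}{\lambda-1}\leq 1+\frac{\ln 2}{\lambda-1}.
\]
Therefore $\ln\varepsilon\geq -\frac{\lambda}{\lambda-1}\bigl(1+\frac{\ln 2}{\lambda-1}\bigr)$, which is a constant depending only on $\lambda$. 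So $\varepsilon=\Omega(1)$.

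This is the only step with any subtlety. The point is that $k$ scales like $N/t$ while $a$ scales like $t/N$, so the product $ka$ is a constant. A naive bound using $k=O(m)$ alone would not suffice.

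\emph{The circuit.} The state $\rho_{G,t}$ is the reduced state of the following pure-state circuit. For each $i\leq k$, prepare $\log_2 N$ qubits in uniform superposition $\sum_j\ket{j}/\sqrt N$ using Hadamards; this works because $N$ is a power of two. Then apply a controlled lookup that XORs $\widetilde g_j$ into an $r$-qubit register. Finally, discard the index registers, placing them in $B$. Tracing out the index register collapses the superposition to the classical mixture over $j$. Since the $k$ blocks are independent, $\Tr_B$ gives exactly $\E_V[\ket V\bra V]$ with $V$ the concatenation.

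The lookup for a single block can be implemented with $O(N r\log N)$ gates. For each $j$ and each bit position $\ell$ with $(g_j)_\ell=1$, apply a $(\log_2 N)$-controlled NOT, conditioned on the index register equaling $j$, targeting qubit $\ell$. Zero columns need no gates. Over $k=O(m)$ blocks, the total is $\poly(m,r)$ gates. The threshold $\varepsilon$ has a polynomial-size description because it is a rational number $((N-2t)/N)^k$ with $O(k\log N)$ bits.
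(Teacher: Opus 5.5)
Your proposal is correct and follows essentially the same route as the paper. You use the same bound $\ln\frac{1-a}{1-b}\geq b-a$, obtained via $\ln(1+u)\geq u/(1+u)$ rather than the paper's integral $\int_a^b\frac{du}{1-u}$. You make the same key point that $ka$ stays bounded because $k$ carries a factor $1/t$, combined with $-\ln(1-a)\leq a/(1-a)$ and $a<1/\lambda$; your constant is marginally weaker but equally sufficient. Your circuit is the same Hadamard-plus-hardwired-lookup construction.

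One wording slip to fix: $V$ is the XOR $\widetilde g_{j_1}\oplus\cdots\oplus\widetilde g_{j_k}$ accumulated in a single $r$-qubit register, as your own circuit does, not the concatenation of the $k$ samples. With separate registers the size bound would be unaffected, but the prepared state would no longer be $\rho_{G,t}$. A single-block $Z^x$ would then have expectation $1-2\wt(xG)/N$ instead of its $k$-th power.
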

    
    \begin{proof}
    Since $0<a<b<1$,
    \[
        \ln\left(\frac{1-a}{1-b}\right)
          =\ln(1-a)-\ln(1-b)
          =\int_a^b\frac{du}{1-u}.
    \]
    The integrand is at least $1$ on $[a,b]$, so
    \[
        \ln\left(\frac{1-a}{1-b}\right)\geq b-a=\frac{2(\lambda-1)t}{N}.
    \]
    Therefore
    \[
        k\leq \frac{N\ln2}{2(\lambda-1)t}+1.
    \]
    Since $t\geq1$ and $N=O(m)$, we conclude that $k=O(m)$.
    
    The circuit that prepares $\rho_{G,t}$ uses $k \log_2 N$ qubits to sample $k$ columns of the padded version of $G$. More precisely, for each $i\in\{1,\ldots,k\}$ we prepare a register $J_i$ of $\log_2N$ qubits in the uniform superposition over $\{1,\ldots,N\}$ by applying Hadamard gates. Then, we initialize an $r$-qubit output register to $0^r$ and compute
    \[
        (J_1,\ldots,J_k,0^r)
        \longmapsto
        \left(J_1,\ldots,J_k,\widetilde g_{J_1}\oplus\cdots\oplus\widetilde g_{J_k}\right).
    \]
    The columns $\widetilde g_j$ can be hardwired to make this operation efficient. The whole circuit has polynomial size because $k=O(m)$ and $N=O(m)$. After this computation, the state before discarding the $J_i$ registers is
    \[
        \frac{1}{\sqrt{N^k}}
        \sum_{j_1,\ldots,j_k\in\{1,\ldots,N\}}
        \ket{j_1,\ldots,j_k}\ket{\widetilde g_{j_1}\oplus\cdots\oplus\widetilde g_{j_k}}.
    \]
    Discarding the $J_1,\ldots,J_k$ registers leaves exactly the classical mixture
    \[
        \rho_{G,t}=
        \frac{1}{N^k}
        \sum_{j_1,\ldots,j_k\in\{1,\ldots,N\}}
        \ket{\widetilde g_{j_1}\oplus\cdots\oplus\widetilde g_{j_k}}
        \bra{\widetilde g_{j_1}\oplus\cdots\oplus\widetilde g_{j_k}}.
    \]
    No approximations are introduced because we picked $N$ as a power of two.
    
    Finally, we lower bound $\varepsilon$. Since $N>2\lambda m$ and $t<m$, we have
    $a=2t/N<1/\lambda$. Also, from Eq.~\eqref{eq:k-choice-v2} and the bound above, we have
    \[
        k\leq \frac{\ln 2}{(\lambda-1)a}+1.
    \]
    Using $-\ln(1-a)\leq a/(1-a)$ and $a<1/\lambda$, we get
    \[
    \begin{aligned}
        -\ln\varepsilon
          &=-k\ln(1-a)\\
          &\leq \left(\frac{\ln 2}{(\lambda-1)a}+1\right)\frac{a}{1-a}\\
          &\leq \frac{\lambda\ln2}{(\lambda-1)^2}+\frac{1}{\lambda-1}.
    \end{aligned}
    \]
    In the last step we used that $a < 1/\lambda$ implies $1/(1-a) < \lambda/(\lambda-1)$ and $a/(1-a)<1/(\lambda-1)$. Thus
    \[
        \varepsilon\geq
        \exp\left(-\frac{\lambda\ln2}{(\lambda-1)^2}-\frac{1}{\lambda-1}\right)
        = \Omega(1),
    \]
    since $\lambda$ is fixed.
    \end{proof}
    
    The next lemma justifies the correctness of the reduction.
    
    \begin{lemma}
    \label{lem:key-identity-v2}
    For every $x\in\F_2^r$,
    \[
        \Tr(Z^x\rho_{G,t})
          =\left(1-\frac{2\wt(xG)}{N}\right)^k.
    \]
    Moreover, every Pauli string containing at least one $X$ or $Y$ has expectation zero on
    $\rho_{G,t}$.
    \end{lemma}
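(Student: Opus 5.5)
The plan is to reduce everything to Lemma~\ref{lem:diagonal-pauli} and Lemma~\ref{lem:bias-weight}, using the independence of the copies $V^{(1)},\ldots,V^{(k)}$. Since $\rho_{G,t}=\E_V[\ket{V}\bra{V}]$ is the diagonal density matrix associated with the random variable $V\in\F_2^r$, Lemma~\ref{lem:diagonal-pauli} immediately yields the second statement: every Pauli string containing an $X$ or a $Y$ has zero expectation. The same lemma gives, for every $x\in\F_2^r$,
\[
    \Tr(Z^x\rho_{G,t})=\E_V\bigl[(-1)^{x\cdot V}\bigr].
\]
So it remains to compute this character expectation.

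The next step uses linearity of the inner product over $\F_2$. Since $x\cdot(V^{(1)}\oplus\cdots\oplus V^{(k)})=\sum_i x\cdot V^{(i)} \pmod 2$, we have
\[
    (-1)^{x\cdot V}=\prod_{i=1}^k(-1)^{x\cdot V^{(i)}}.
\]
The $V^{(i)}$ are independent, so the expectation of this product factors into the product of expectations. Because the $V^{(i)}$ are also identically distributed, this gives
\[
    \E_V\bigl[(-1)^{x\cdot V}\bigr]=\left(\E\bigl[(-1)^{x\cdot V_1}\bigr]\right)^k.
\]

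Finally, $V_1=\widetilde g_J$ with $J$ uniform on $\{1,\ldots,N\}$, so
\[
    \E\bigl[(-1)^{x\cdot V_1}\bigr]=\frac1N\sum_{j=1}^N(-1)^{x\cdot\widetilde g_j}.
\]
By the padded version of Lemma~\ref{lem:bias-weight}, this sum equals $(N-2\wt(xG))/N=1-2\wt(xG)/N$. Raising to the $k$-th power gives the claimed identity.

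No step is a real obstacle. The only points needing care are that the direct sum $\oplus$ is bitwise XOR, so that the character $(-1)^{x\cdot(\cdot)}$ is multiplicative, and that the independence of the copies is exactly what justifies factoring the expectation.
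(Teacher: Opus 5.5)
Your proposal is correct and follows essentially the same route as the paper: Lemma~\ref{lem:diagonal-pauli} for both parts, multiplicativity of $(-1)^{x\cdot(\cdot)}$ under $\oplus$, factorization into $k$ identical single-sample expectations, and the padded case of Lemma~\ref{lem:bias-weight}. The only cosmetic difference is that the paper writes the expectation as an explicit uniform sum over tuples $(j_1,\ldots,j_k)$ and peels off one factor at a time, whereas you invoke independence of the copies directly.
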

    
    \begin{proof}
    By the definition of $V$, the density matrix is
    \[
        \rho_{G,t}=
        \frac{1}{N^k}
        \sum_{j_1,\ldots,j_k\in\{1,\ldots,N\}}
        \ket{\widetilde g_{j_1}\oplus\cdots\oplus\widetilde g_{j_k}}
        \bra{\widetilde g_{j_1}\oplus\cdots\oplus\widetilde g_{j_k}}.
    \]
    Therefore, using Lemma~\ref{lem:diagonal-pauli},
    \[
    \begin{aligned}
        \Tr(Z^x\rho_{G,t})
          &=\frac{1}{N^k}
            \sum_{j_1,\ldots,j_k\in\{1,\ldots,N\}}
            (-1)^{x\cdot(\widetilde g_{j_1}\oplus\cdots\oplus\widetilde g_{j_k})}.
    \end{aligned}
    \]
    It holds that
    \[
        x\cdot(\widetilde g_{j_1}\oplus\cdots\oplus\widetilde g_{j_k})
          =x\cdot\widetilde g_{j_1}+\cdots+x\cdot\widetilde g_{j_k}
          \pmod 2.
    \]
    Thus
    \[
        (-1)^{x\cdot(\widetilde g_{j_1}\oplus\cdots\oplus\widetilde g_{j_k})}
          =\prod_{i=1}^k(-1)^{x\cdot\widetilde g_{j_i}}.
    \]
    We now isolate the first sampled column. Namely,
    \[
    \begin{aligned}
        \Tr(Z^x\rho_{G,t})
          &=\frac{1}{N^k}
            \sum_{j_1,\ldots,j_k\in\{1,\ldots,N\}}
            \prod_{i=1}^k(-1)^{x\cdot\widetilde g_{j_i}}\\
          &=\left(\frac1N\sum_{j_1=1}^N(-1)^{x\cdot\widetilde g_{j_1}}\right)
            \left(
            \frac{1}{N^{k-1}}
            \sum_{j_2,\ldots,j_k\in\{1,\ldots,N\}}
            \prod_{i=2}^k(-1)^{x\cdot\widetilde g_{j_i}}
            \right).
    \end{aligned}
    \]
    The first factor is exactly the expectation of the same observable for one elementary sample. Indeed, if
    \[
        \rho_{V_1}:=\E_{V_1}[\ket{V_1}\bra{V_1}],
    \]
    then
    \[
    \begin{aligned}
        \Tr(Z^x\rho_{V_1})
          &=\E_{V_1}[(-1)^{x\cdot V_1}]\\
          &=\frac1N\sum_{j=1}^N(-1)^{x\cdot\widetilde g_j}.
    \end{aligned}
    \]
    The second factor has the same form as the original expression, but with the remaining $k-1$ samples. Thus the same isolation can be repeated for $j_2$, then for $j_3$, and so on. Iterating this identity gives
    \[
        \Tr(Z^x\rho_{G,t})
          =\left(\frac1N\sum_{j=1}^N(-1)^{x\cdot\widetilde g_j}\right)^k.
    \]
    For the elementary sample $V_1=\widetilde g_J$, Lemma~\ref{lem:bias-weight} gives
    \[
    \begin{aligned}
        \frac1N\sum_{j=1}^N(-1)^{x\cdot\widetilde g_j}
          &=\frac{N-2\wt(xG)}{N}\\
          &=1-\frac{2\wt(xG)}{N}.
    \end{aligned}
    \]
    Therefore
    \[
        \Tr(Z^x\rho_{G,t})
          =\left(1-\frac{2\wt(xG)}{N}\right)^k.
    \]
    This implies the desired equality. The final statement from this lemma follows as well from
    Lemma~\ref{lem:diagonal-pauli}.
    \end{proof}
    
    We now combine all these lemmas into a single theorem.
    
    \begin{theorem}
    \label{thm:reduction-v2}
    The map described above is a deterministic classical polynomial-time reduction from $\lambda$-\GapMWC to \LPC. The produced \LPC
    instances use diagonal states corresponding to randomized classical circuits, and $\varepsilon$ is bounded from below by a constant.
    \end{theorem}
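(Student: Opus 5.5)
The proof assembles the preceding lemmas: polynomial-time computability, then completeness, then soundness. For efficiency, Gaussian elimination to full row rank, computing $N$, $k$ and $\varepsilon$, and writing down the circuit are all polynomial. Lemma~\ref{lem:poly-v2} gives $k=O(m)$, $N=O(m)$, a polynomial-size circuit, and $\varepsilon=\Omega(1)$, which is the last claim of the theorem. The circuit is a uniform superposition followed by a classical XOR computation and a partial trace. Its output is therefore a diagonal state, i.e.\ a randomized classical circuit. One detail to handle is that $\varepsilon=(1-a)^k$ must be specified to polynomially many bits. I would either note that it is a rational with polynomial-size numerator and denominator, or replace it by a slightly smaller rational approximation; the margins below have enough slack for that.

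For completeness, suppose $\Delta(G)\le t$. Since $G$ has full row rank, there is a nonzero $x\in\F_2^r$ with $\wt(xG)\le t$. Then $Z^x$ is a non-identity Pauli string. By Lemma~\ref{lem:key-identity-v2} and the fact that $N>2\lambda m\ge 2t$,
\[
\Tr(Z^x\rho_{G,t})=\left(1-\tfrac{2\wt(xG)}{N}\right)^k\ge(1-a)^k=\varepsilon,
\]
where the base is nonnegative. So this is a \Yes{} instance of \LPC.

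For soundness, suppose $\Delta(G)>\lambda t$. Lemma~\ref{lem:key-identity-v2} already makes every Pauli containing an $X$ or $Y$ vanish, so only the strings $Z^x$ with $x\neq0$ remain. For each such $x$, full row rank gives $\lambda t<\wt(xG)\le m$. Hence
\[
2\wt(xG)/N\in(b,\,2m/N)\subseteq(b,1/\lambda)\subseteq(b,1),
\]
so the base $1-2\wt(xG)/N$ lies in $(0,1-b)$. It follows that
\[
|\Tr(Z^x\rho_{G,t})|<(1-b)^k=\varepsilon\left(\tfrac{1-b}{1-a}\right)^k\le\varepsilon/2,
\]
because the choice of $k$ in Eq.~\eqref{eq:k-choice-v2} gives $k\ln\frac{1-a}{1-b}\ge\ln2$.

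The one step I expect to need care is showing that the base stays nonnegative, so that taking the $k$-th power preserves the inequalities in both directions without sign issues. The assumption $N>2\lambda m\geq 2m$ guarantees this. If $\varepsilon$ must be rounded, I would take any rational $\varepsilon'$ with $\tfrac{2}{2^{1/2}}(1-b)^k\le\varepsilon'\le(1-a)^k$. Alternatively, I would enlarge $k$ by a constant factor so that the ratio $\left(\tfrac{1-b}{1-a}\right)^k$ is at most $1/4$, which leaves room for rounding.
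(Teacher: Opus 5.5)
Your proposal is correct and follows the paper's proof. Efficiency and $\varepsilon=\Omega(1)$ come from Lemma~\ref{lem:poly-v2}, and completeness and soundness come from Lemma~\ref{lem:key-identity-v2} together with the choice of $k$ in Eq.~\eqref{eq:k-choice-v2}. You also add useful care: $N>2\lambda m$ keeps the base $1-2\wt(xG)/N$ positive, which the paper uses only implicitly when it passes to absolute values.

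The one slip is in your optional rounding remark. Soundness actually needs $\varepsilon'\geq 2(1-b)^k$, not $\sqrt{2}\,(1-b)^k$. This is moot, since $\varepsilon=\bigl((N-2t)/N\bigr)^k$ is already a rational with $O(k\log N)$-bit numerator and denominator, so no rounding is needed.
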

    
    \begin{proof}
    Lemma~\ref{lem:poly-v2} states that the reduction can be implemented in polynomial time and that $\varepsilon = \Omega(1)$.
    
    For the correctness, suppose $(G,t)$ is a yes-instance of $\lambda$-\GapMWC. Then $\Delta(G)\leq t$, so there is
    a nonzero codeword $xG$ with $\wt(xG)\leq t$. Hence $Z^x$ is
    a non-identity Pauli, and by Lemma~\ref{lem:key-identity-v2} we have
    \[
        \Tr(Z^x\rho_{G,t})
          =\left(1-\frac{2\wt(xG)}{N}\right)^k
          \geq\left(1-\frac{2t}{N}\right)^k
          =\varepsilon.
    \]
    Thus, the produced \LPC instance is a yes-instance.
    
    Now suppose, on the contrary, that $(G,t)$ is a no-instance of $\lambda$-\GapMWC. Then every nonzero codeword has
    weight greater than $\lambda t$. Since $G$ has full row rank, every nonzero
    $x\in\F_2^r$ gives a nonzero codeword $xG$, and so
    \[
        \wt(xG)>\lambda t.
    \]
    Then, by Lemma~\ref{lem:key-identity-v2},
    \[
        \Tr(Z^x\rho_{G,t})
          =\left(1-\frac{2\wt(xG)}{N}\right)^k
          <\left(1-\frac{2\lambda t}{N}\right)^k
    \]
    for every nonzero $x$. The choice of $k$ in Eq.~\eqref{eq:k-choice-v2} implies
    \[
        k\ln\left(\frac{1-2t/N}{1-2\lambda t/N}\right)\geq\ln2.
    \]
    Equivalently,
    \[
        \left(1-\frac{2\lambda t}{N}\right)^k
          \leq\frac12\left(1-\frac{2t}{N}\right)^k
          =\frac{\varepsilon}{2}.
    \]
    Therefore every non-identity $Z$-type Pauli has absolute expectation strictly smaller than $\varepsilon/2$. Every Pauli containing an $X$ or a $Y$ has expectation zero by Lemma~\ref{lem:key-identity-v2}. Hence the produced \LPC instance is a no-instance.
    \end{proof}
    
    The next corollary is therefore our main result.
    
    \begin{corollary}
    \label{coro:main}
    If $\LPC \in \BQP{}$, then $\mathrm{NP}\subseteq\mathrm{BQP}$.
    \end{corollary}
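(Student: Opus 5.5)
The corollary follows by composing the reduction of Theorem~\ref{thm:reduction-v2} with Corollary~\ref{coro:hardness-of-code}. Suppose $\mathcal{A}$ is a polynomial-time quantum algorithm deciding \LPC{} with success probability at least $2/3$ on every promised instance. We build a \BQP{} algorithm for $\lambda$-\GapMWC{}.

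\textbf{Step 1: the algorithm.} Given a promised instance $(G,t)$, first preprocess it classically: run Gaussian elimination to make $G$ full row rank, and handle the trivial edge cases directly. Then apply the deterministic polynomial-time map of Theorem~\ref{thm:reduction-v2}, producing $(C_{G,t},\varepsilon)$. Finally, run $\mathcal{A}$ on $(C_{G,t},\varepsilon)$ and output its answer.

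\textbf{Step 2: the promise is preserved.} Theorem~\ref{thm:reduction-v2} shows that yes-instances map to \LPC{} yes-instances and no-instances map to \LPC{} no-instances. So the \LPC{} promise holds on every input we feed to $\mathcal{A}$, and its $2/3$ correctness guarantee applies. By Lemma~\ref{lem:poly-v2}, the circuit $C_{G,t}$ has size polynomial in $|G|$ and $\varepsilon=\Omega(1)$. Hence $\mathcal{A}$ runs in time polynomial in the original input size, and $\lambda$-\GapMWC{} $\in$ \BQP{}.

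\textbf{Step 3: conclude.} Corollary~\ref{coro:hardness-of-code} now gives $\NP\subseteq\BQP$.

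\textbf{Expected obstacle: the edge cases in Step 1.} These must be disposed of before invoking the reduction, since they fall outside the assumptions of Theorem~\ref{thm:reduction-v2}:
\begin{itemize}
\item \emph{$t\geq m$.} Any nonzero codeword has weight at most $m\leq t$, so $\Delta(G)\leq t$ and we answer \Yes{}.
\item \emph{Trivial code.} Under the paper's standing assumption that $\mathcal{C}(G)\neq\{0^m\}$, this case does not arise.
\end{itemize}
Beyond this, one only needs to note that the composition of a classical polynomial-time map with a \BQP{} procedure is again in \BQP{}, which is routine.
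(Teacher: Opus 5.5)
Your proposal is correct and follows the paper's proof, which simply composes the reduction of Theorem~\ref{thm:reduction-v2} with Corollary~\ref{coro:hardness-of-code}. Your explicit treatment of the rank-deficient and $t\geq m$ cases just spells out the ``without loss of generality'' normalizations that the paper makes in the preliminaries.
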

    
    \begin{proof}
    Combine the reduction from Theorem~\ref{thm:reduction-v2} with Corollary~\ref{coro:hardness-of-code}.
    \end{proof}
    
    Given this result, the question remains whether $\LPC$ is in $\NP{}$ or rather is $\QCMA{}$-hard. We provide some insight into this question by proving that it is \BQP{}-hard. This implies that, under the hypothesis that $\BQP \nsubseteq \NP$\footnote{As mentioned before, due to the presence of promises in \BQP{} problems, this statement shouldn't be read as an exact inclusion, but rather as ``There are some problems in \BQP{} that cannot be solved in non-deterministic polynomial time''.}, then $\LPC \notin \NP$.
    
    \begin{proposition}\label{prop:bqp-hard}
        \LPC is \BQP-hard.
    \end{proposition}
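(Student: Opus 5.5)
We reduce an arbitrary \BQP{} promise problem to \LPC{} by building, from the \BQP{} verifier, a state preparation circuit whose output is a single qubit that encodes the acceptance probability in a Pauli coefficient. Let $L=(L_{\Yes},L_{\No})$ be in \BQP{}. By standard amplification there is a uniform family of polynomial-size circuits $Q_w$ such that, on input $w$, measuring a designated output qubit of $Q_w\ket{0\cdots0}$ yields $1$ with probability $p_w\geq 1-\delta$ if $w\in L_{\Yes}$ and $p_w\leq\delta$ if $w\in L_{\No}$, where $\delta$ is a small constant, say $\delta=1/10$. The reduction outputs $(C,\varepsilon)$ where $C$ is $Q_w$ with the output qubit as register $A$ ($n=1$) and all other qubits as $B$. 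We will choose $\varepsilon$ as a fixed constant, say $\varepsilon=1/2$.

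First, compute the Pauli coefficients of $\rho_C$. The $Z$ coefficient is $\Tr(Z\rho_C)=(1-p_w)-p_w=1-2p_w$, so it has absolute value at least $1-2\delta$ in both the yes and the no cases. Since this does not distinguish the cases, we modify $C$. The fix is to make the no case produce the maximally mixed state. Add a fresh qubit, put it in $\ket{+}$ with a Hadamard gate, and copy it with a CNOT into the output register conditioned on the answer qubit being $0$. Concretely, the output qubit is set to $\ket{0}$ when the verifier accepts, and to a uniformly random bit, obtained by a CNOT from a Hadamard-prepared ancilla that is then discarded, when the verifier rejects. The resulting one-qubit state is $\rho_C=p_w\ket{0}\bra{0}+(1-p_w)I/2$. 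Its Pauli coefficients are $\Tr(X\rho_C)=\Tr(Y\rho_C)=0$, because the state is diagonal (as in Lemma~\ref{lem:diagonal-pauli}), and $\Tr(Z\rho_C)=p_w$.

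Then check the promise. In the yes case $|\Tr(Z\rho_C)|\geq 1-\delta\geq 1/2=\varepsilon$. In the no case every non-identity Pauli has $|\Tr(P\rho_C)|\leq\delta\leq 1/4=\varepsilon/2$. The construction adds only a constant number of gates, so it is a deterministic polynomial-time reduction. Hence any \BQP{} problem reduces to \LPC{}.

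The only delicate point is making sure the no case has all coefficients small, rather than just the yes case large. This is exactly why we need the mixing trick instead of outputting the answer qubit directly. We also need to choose $\delta<1/3$ through amplification so that the constant gap between $1-\delta$ and $2\delta$ fits the factor-$2$ promise of \LPC{}.
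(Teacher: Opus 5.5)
Your proof is correct and uses the same construction as the paper. A Hadamard-prepared ancilla is CNOT-ed into a fresh $\ket{0}$ output qubit, controlled on the answer qubit being $0$. This gives $\rho = p\ket{0}\bra{0} + (1-p)I/2$, with $\Tr(Z\rho)=p$ and vanishing $X,Y$ coefficients.

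The only difference is cosmetic. The paper skips amplification and takes $\varepsilon = 2/3$ directly, because the standard bounds $p\geq 2/3$ and $p\leq 1/3=\varepsilon/2$ already fit the factor-$2$ promise. So $\delta\leq 1/3$ suffices, not the strict $\delta<1/3$ you state.
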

    
    \begin{proof}
    Let $\Pi\in\BQP{}$ and let $\{C_n\}_{n \in \mathbb{N}}$ be the sequence of quantum circuits that solves $\Pi$ by measuring the first qubit of the circuit, which we refer to as $q_0$ from now on. We show how to construct, for every $x \in \{0,1\}^n$, a state preparation circuit whose output register consists of a one qubit state $\rho_x$ such that $C_n$ accepts $\ket{x}$ with probability above 2/3 if and only if there is a non-identity Pauli $P$ such that $|\Tr{}[P\rho_x]| \geq 2/3$.
    
    Assume that $C_n$ works over $m \geq n$ qubits and defers all measurements to the end, and append to it two qubits $q_1$ and $q_2$ initialized in $\ket{0}$. Add a Hadamard gate acting on $q_1$, and after the gates from $C_n$ apply the operation $U$ defined as
    \begin{align*}
        \ket{b_0,b_1,b_2} \overset{U}{\mapsto} \ket{b_0, b_1, b_2 \oplus (1-b_0)b_1}
    \end{align*}
    to the qubits $q_0$, $q_1$ and $q_2$. We can write this gate as
    \begin{align*}
        U = \ket{0}\bra{0}\otimes \textsc{CNOT}_{b_1, b_2} + \ket{1}\bra{1} \otimes I^{2}    
    \end{align*}
    Then, trace out all the qubits except for $q_2$, and let $\rho_x$ be this state. We show that, if $p_x = \Tr{}[(\ket{1}\bra{1} \otimes I^{m-1})C_n\ket{x0^{m-n}}\bra{x0^{m-n}}C_n^{\dagger}]$ is the accepting probability of $C_n\ket{x0^{m-n}}$, then $\rho_x = p_x \ket{0}\bra{0} + (1-p_x)I/2$. The correctness of the reduction follows from this, since then $\Tr(X\rho_x) = \Tr(Y\rho_x) = 0$ and $\Tr(Z\rho_x) = p_x$.
    
    Write $C_n\ket{x0^{m-n}} = \ket{0} \ket{\sigma_0} + \ket{1}\ket{\sigma_1}$ where $||\ket{\sigma_1}||^2 = p_x$ and $||\ket{\sigma_0}||^2 = 1-p_x$. It holds that
    \begin{align*}
        C_n\ket{x0^{m-n}} \ket{+} \ket{0} \overset{U}{\mapsto} \ket{0}\ket{\sigma_0} \left(\frac{1}{\sqrt{2}}(\ket{00} + \ket{11})\right) + \ket{1}\ket{\sigma_1} \ket{+}\ket{0}
    \end{align*}
    Thus, if we trace all the qubits except the last one, we obtain
    \begin{align*}
        \rho_x = (1-p_x)I/2 + p_x \ket{0}\bra{0}
    \end{align*}
    where we used the fact that, if $\ket{\Phi} = \frac{1}{\sqrt{2}}(\ket{00} + \ket{11})$, tracing over the first qubit we get $\Tr{}_1[\ket{\Phi}\bra{\Phi}] = I/2$. 
    \end{proof}

    \section{The pure state case}
    \label{sec:pure state-case}
    
    We now show that the same reduction can be improved to prove hardness for pure states. To do this, we employ the same reduction from Theorem~\ref{thm:reduction-v2} but use the unitary from Lemma~\ref{lem:pauli-flat-encoding} to ``hide'' the undesirable Paulis which do not correspond to measuring the qubits containing the superposition of the columns of $\widetilde{G}$. 
    
    \begin{proposition}
    \label{prop:pure-np-hard}
    If \LPC restricted to pure states (i.e. no qubits are discarded from the circuit) is in $\BQP{}$, then $\NP{} \subseteq \BQP{}$.
    \end{proposition}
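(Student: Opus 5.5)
The plan is to purify the mixed state $\rho_{G,t}$ from Theorem~\ref{thm:reduction-v2} by keeping the index registers $J_1,\ldots,J_k$, and to hide those registers from every Pauli using the encoding unitary of Lemma~\ref{lem:pauli-flat-encoding}. Concretely, the circuit of Lemma~\ref{lem:poly-v2} produces
\[
\ket{\psi}=\sum_{j}\alpha_j\ket{j}\ket{f(j)},\qquad f(j)=\widetilde g_{j_1}\oplus\cdots\oplus\widetilde g_{j_k},\quad \alpha_j=N^{-k/2}.
\]
After computing $f(j)$, I will replace every qubit $b$ of the $J$ registers by $c$ qubits: append $c-1$ ancillas in $\ket{0}$ and apply $U$, which maps $\ket{b0^{c-1}}\mapsto U\ket{b0^{c-1}}=:\ket{\overline b}$. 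No qubit is discarded, so the output is a pure state
\[
\ket{\psi'}=\sum_j\alpha_j\ket{\mathrm{Enc}(j)}\ket{f(j)}.
\]
Here $\eta$ is the constant $\varepsilon_{\min}/8$, where $\varepsilon_{\min}$ is the constant lower bound on $\varepsilon$ from Lemma~\ref{lem:poly-v2}. Hence $c$ and $U$ are constants and the circuit stays polynomial. If the gate set is discrete, I will compile $U$ to a constant precision small compared to $\eta$; this only perturbs each expectation by a small additive constant, which can be absorbed into $\eta$. The threshold $\varepsilon$ is kept as before.

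Next I analyse an arbitrary non-identity Pauli $P'=\bigl(\bigotimes_i P_i\bigr)\otimes Q$. Here each $P_i$ acts on the $c$ qubits encoding one original index qubit, and $Q$ acts on the output register. There are two cases.

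\emph{Case 1: all $P_i=I$.} The encodings $\ket{\mathrm{Enc}(j)}$ are orthonormal, so the cross terms vanish. We get $\bra{\psi'}P'\ket{\psi'}=\sum_j|\alpha_j|^2\bra{f(j)}Q\ket{f(j)}=\Tr(Q\rho_{G,t})$. Since $Q\neq I$, Lemma~\ref{lem:key-identity-v2} and the proof of Theorem~\ref{thm:reduction-v2} apply verbatim. In the yes case $Z^x$ gives value at least $\varepsilon$. In the no case every such $Q$ gives at most $\varepsilon/2$.

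\emph{Case 2: some $P_{i_0}\neq I$.} I isolate that block and write $\ket{\psi'}=\sum_{b\in\{0,1\}}\ket{\overline b}_{i_0}\otimes\ket{\phi_b}$, with $\|\phi_0\|^2+\|\phi_1\|^2=1$. Let $R$ be the (unitary) tensor product of all the other Pauli factors. Then
\[
|\bra{\psi'}P'\ket{\psi'}|\le\sum_{b,b'}|\bra{\overline b}P_{i_0}\ket{\overline{b'}}|\,|\bra{\phi_b}R\ket{\phi_{b'}}|\le\eta\Bigl(\sum_b\|\phi_b\|\Bigr)^2\le 2\eta.
\]
The first factor is bounded by Lemma~\ref{lem:pauli-flat-encoding} and the second by Cauchy--Schwarz. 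This gives at most $\varepsilon_{\min}/4\le\varepsilon/2$ in both yes and no instances, so these Paulis never break the promise.

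The main obstacle is precisely Case 2. A naive bound multiplies the factor $\eta$ against a sum over exponentially many pairs $(j,j')$, which is useless. The fix is to condition on a single non-identity block as above, so that only a $2\times 2$ sum appears and the bound depends on $\eta$ alone, not on $k\log_2 N$. With both cases in hand, the map $(G,t)\mapsto(\ket{\psi'},\varepsilon)$ is a deterministic polynomial-time reduction from $\lambda$-\GapMWC to pure-state \LPC. Combined with Corollary~\ref{coro:hardness-of-code}, this yields $\NP\subseteq\BQP$.
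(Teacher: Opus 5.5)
Your proposal is correct and uses the same construction as the paper: keep the $k\log_2 N$ index qubits, encode each one as $\ket{b0^{c-1}}\mapsto U\ket{b0^{c-1}}$ with $U$ from Lemma~\ref{lem:pauli-flat-encoding}, and split Paulis according to whether they act trivially on the encoded register. Your Case~1 is identical to the paper's.

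The difference is in Case~2, and your claim that the full expansion is useless is not accurate. The paper does expand over all pairs $(s,s')$ and bounds $|\bra{v(s')}P_A\ket{v(s)}|\le 1$. It does not then charge $\eta$ term by term. Instead it uses the uniform amplitudes $2^{-L/2}$ and the product form of $\ket{\overline s}$ to factor the sum as
\[
\prod_i\Bigl(\tfrac12\sum_{x,y}|\bra{\overline x}P_i\ket{\overline y}|\Bigr).
\]
Identity blocks contribute exactly $1$ by orthonormality of $\ket{\overline 0},\ket{\overline 1}$. Each non-identity block contributes at most $2\eta$. This gives the bound $(2\eta)^{\#\{i:P_i\neq I\}}\le 2\eta$.

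Your argument isolates a single non-identity block and applies Cauchy--Schwarz to $\ket{\phi_0},\ket{\phi_1}$. It yields only $2\eta$ rather than a bound decaying with the number of non-identity blocks, but it is more robust in two ways:
\begin{itemize}
\item It uses nothing about the other blocks or about the amplitudes $\alpha_j$, so it would survive non-uniform sampling or $N$ not a power of two.
\item It invokes the $\eta$-estimate for one constant-size block only. This makes clear that compiling $U$ into a fixed circuit $C_U$, as in the paper's footnote, just enlarges $\eta$ slightly. There is no error accumulating over the $L$ blocks, because $C_U$ is exactly unitary and Case~1 stays exact.
\end{itemize}
Either bound suffices, since all that is needed is $2\eta\le\varepsilon/2$.
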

    
    \begin{proof}
    We modify the reduction from Theorem~\ref{thm:reduction-v2}. Let $\varepsilon_0>0$ be the constant lower bound on $\varepsilon$ obtained in Lemma~\ref{lem:poly-v2}. Fix a constant $\eta>0$ such that
    \[
        2\eta\leq \varepsilon_0/2,
    \]
    and let $U$ be the unitary from Lemma~\ref{lem:pauli-flat-encoding} picking $\eta$ as constant. By universality we can assume that we have a circuit approximating $U$ up to any precision, and in particular we may as well simply assume access to a circuit $C_U$ such that\footnote{Being more precise, since $C_U$ may not be exactly $U$, to obtain the bound in the equation we can define $U$ by picking as constant $\eta /2$ and then take $C_U$ as a sufficiently precise approximation of $U$ in operator norm.}
    \begin{align*}
        |\bra{x0^{c-1}} C_U^\dagger PC_U\ket{y0^{c-1}}| \leq \eta
    \end{align*}
    for $x,y \in \{0,1\}$ and every non-identity Pauli matrix $P$ on $c$ qbits. 
    
    As before, let $N$ be a power of two, and let $L:=k\log_2 N$. We identify each tuple
    $(j_1,\ldots,j_k)\in\{1,\ldots,N\}^k$ with a binary string $s\in\F_2^L$. Let
    \[
        v(s):=\widetilde g_{j_1}\oplus\cdots\oplus\widetilde g_{j_k}\in\F_2^r.
    \]
    For $s=(s_1,\ldots,s_L)$, define
    \begin{align*}
        \ket{\overline{s_i}} = C_U\ket{s_i0^{c-1}}.
    \end{align*}
    and
    \[
        \ket{\overline s}:=\bigotimes_{i=1}^L\ket{\overline{s_i}}.
    \]
    where we appended $L(c-1)$ qubits to the system.
    The pure state reduction outputs the state
    \[
        \ket{\Psi_{G,t}}
          :=\frac{1}{\sqrt{2^L}}
            \sum_{s\in\F_2^L}
            \ket{v(s)}\ket{\overline s}.
    \]
    This state is prepared in polynomial time: the only difference from the previous circuit is the inclusion of the $L$ $C_U$ gates, which have constant size (they are fixed once we fix $\eta$).
    
    If a Pauli operator acts trivially on the register containing $\overline{s}$, then its expectation on $\ket{\Psi_{G,t}}$ is exactly the same as the expectation on the mixed state $\rho_{G,t}$. Indeed, for every Pauli $P_A$ on the first register,
    \[
    \begin{aligned}
        \bra{\Psi_{G,t}} P_A\otimes I \ket{\Psi_{G,t}}
          &=\frac{1}{2^L}
            \sum_{s,s'\in\F_2^L}
            \bra{v(s')}P_A\ket{v(s)}\braket{\overline{s'}}{\overline{s}}\\
          &=\frac{1}{2^L}
            \sum_{s\in\F_2^L}
            \bra{v(s)}P_A\ket{v(s)}\\
          &=\Tr(P_A\rho_{G,t}).
    \end{aligned}
    \]
    In particular, for every $x\in\F_2^r$,
    \[
        \bra{\Psi_{G,t}} Z^x\otimes I \ket{\Psi_{G,t}}
          =\left(1-\frac{2\wt(xG)}{N}\right)^k.
    \]
    
    It remains to check that the encoded sample register does not create additional large Pauli coefficients. Let $P_A\otimes P_B$ be a Pauli string, where $P_A$ acts on the first register and $P_B$ acts on the encoded sample register. Write
    \[
        P_B=P_1\otimes\cdots\otimes P_L,
    \]
    where each $P_i$ is a Pauli string on the $c$ physical qubits encoding the $i$-th sample bit. Then
    \[
    \begin{aligned}
        \bra{\Psi_{G,t}}P_A\otimes P_B\ket{\Psi_{G,t}}
          &=\frac{1}{2^L}
            \sum_{s,s'\in\F_2^L}
            \bra{v(s')}P_A\ket{v(s)}
            \prod_{i=1}^L
            \bra{\overline{s_i'}}P_i\ket{\overline{s_i}}.
    \end{aligned}
    \]
    Since $|\bra{v(s')}P_A\ket{v(s)}|\leq 1$, we get
    \[
    \begin{aligned}
        \left|\bra{\Psi_{G,t}}P_A\otimes P_B\ket{\Psi_{G,t}}\right|
          &\leq
            \frac{1}{2^L}
            \sum_{s,s'\in\F_2^L}
            \prod_{i=1}^L
            \left|\bra{\overline{s_i'}}P_i\ket{\overline{s_i}}\right|\\
          &=
            \prod_{i=1}^L
            \left(
            \frac12
            \sum_{x,y\in\{0,1\}}
            \left|\bra{x0^{c-1}}C_U^\dagger P_i C_U\ket{\overline y0^{c-1}}\right|
            \right).
    \end{aligned}
    \]
    If $P_i=I^{c}$, then the corresponding factor is equal to $1$, because
    $\braket{x0^{c-1}}{y0^{c-1}}=0$ for $x\neq y$ and is equal to $1$ for $x=y$. If $P_i\neq I^{c}$, then Lemma~\ref{lem:pauli-flat-encoding} gives
    \[
        \frac12
        \sum_{x,y\in\{0,1\}}
        \left|\bra{x0^{c-1}}C_U^\dagger P_i C_U\ket{y0^{c-1}}\right|
          \leq 2\eta.
    \]
    Therefore, if $P_B$ is non-identity, then at least one of the $P_i$'s is non-identity, and
    \[
        \left|\bra{\Psi_{G,t}}P_A\otimes P_B\ket{\Psi_{G,t}}\right|
          \leq 2\eta
          \leq \varepsilon_0/2
          \leq \varepsilon/2.
    \]
    
    Now suppose $(G,t)$ is a yes-instance of $\lambda$-\GapMWC. By the proof of Theorem~\ref{thm:reduction-v2}, there is a nonzero $x\in\F_2^r$ such that
    \[
        \left|\bra{\Psi_{G,t}} Z^x\otimes I \ket{\Psi_{G,t}}\right|
          \geq \varepsilon.
    \]
    Thus the pure state instance is a yes-instance.
    
    On the other hand, suppose $(G,t)$ is a no-instance. If a non-identity Pauli acts trivially on the encoded sample register, then the preceding equality with $\rho_{G,t}$ and Theorem~\ref{thm:reduction-v2} imply that its absolute expectation is at most $\varepsilon/2$. If it acts nontrivially on the encoded sample register, then the bound above gives absolute expectation at most $\varepsilon/2$. Hence every non-identity Pauli has absolute expectation at most $\varepsilon/2$, and the pure state instance is a no-instance.
    
    This gives a deterministic polynomial-time reduction from $\lambda$-\GapMWC to the pure state restriction of \LPC. Combining this with Corollary~\ref{coro:hardness-of-code} proves the claim.
    \end{proof}
    
    \section{Conclusion}
    \label{sec:discussion}

    In this paper we introduced the problem of detecting, given a mechanism to prepare a state $\rho$ and a threshold $\varepsilon$, whether there exists some non-identity Pauli matrix $P$ such that $|\Tr{}(P\rho)|\geq \varepsilon$. Through a reduction from the gapped version of the minimum weight code problem we showed that the existence of a $\BQP{}$ algorithm to solve this problem would imply that $\NP{} \subseteq \BQP{}$, even when restricted to pure states and a constant threshold. Moreover, we also proved that the problem is $\BQP{}$-hard.  Conceptually, these results indicate that under the usual complexity-theoretical assumption there is no efficient algorithm able to find (or even detect) large coefficients in the Pauli basis. 
    
    Note that we do not prove a tight bound, since we only have membership in \QCMA{} and hardness for $\NP{}$. In particular, we find it likely that \LPC is \QCMA-hard. Nonetheless, this result is already enough to clarify the situation regarding efficient algorithms for selective Pauli tomography: under the common assumptions an algorithm intended to find the largest Pauli coefficients with a polynomial amount of resources \textit{must} exploit additional structure, such as Pauli sparsity, stabilizer-like behavior~\cite{montanaro2017learning,leone2024learning,grewal2025efficient}, low rank~\cite{gross2010quantum}, locality, low entanglement~\cite{cramer2010efficient}, etc. Moreover, the \BQP{}-hardness result also suggests that a purely classical witness characterization of \LPC is unlikely, since that would imply that $\BQP{} \subseteq \NP{}$.

    \bibliographystyle{plain}
    \bibliography{biblio}

@article{vardy1997intractability,
  title={The intractability of computing the minimum distance of a code},
  author={Vardy, Alexander},
  journal={IEEE Transactions on Information Theory},
  volume={43},
  number={6},
  pages={1757--1766},
  year={1997},
  publisher={IEEE}
}

@article{hinsche2025efficient,
  title={Efficient distributed inner-product estimation via Pauli sampling},
  author={Hinsche, Marcel and Ioannou, Marios and Jerbi, Sofiene and Leone, Lorenzo and Eisert, Jens and Carrasco, Jose},
  journal={PRX Quantum},
  volume={6},
  number={3},
  pages={030354},
  year={2025},
  publisher={APS}
}

@article{gross2010quantum,
  title={Quantum state tomography via compressed sensing},
  author={Gross, David and Liu, Yi-Kai and Flammia, Steven T and Becker, Stephen and Eisert, Jens},
  journal={Physical review letters},
  volume={105},
  number={15},
  pages={150401},
  year={2010},
  publisher={APS}
}

@article{dumer2003hardness,
  title={Hardness of approximating the minimum distance of a linear code},
  author={Dumer, Ilya and Micciancio, Daniele and Sudan, Madhu},
  journal={IEEE Transactions on Information Theory},
  volume={49},
  number={1},
  pages={22--37},
  year={2003},
  publisher={IEEE}
}

@article{ciancaglini2026measuring,
  title={Measuring the largest coefficients of a quantum state},
  author={Ciancaglini, Nicol{\'a}s and Cifuentes, Santiago and Bellomo, Guido and Figueira, Santiago and Bendersky, Ariel},
  journal={arXiv preprint arXiv:2605.00341},
  year={2026}
}

@inproceedings{cheng2009deterministic,
  title={A deterministic reduction for the gap minimum distance problem},
  author={Cheng, Qi and Wan, Daqing},
  booktitle={Proceedings of the forty-first annual ACM symposium on Theory of computing},
  pages={33--38},
  year={2009}
}

@article{bendersky2009general,
  title={General theory of measurement with two copies of a quantum state},
  author={Bendersky, Ariel and Paz, Juan Pablo and Cunha, Marcelo Terra},
  journal={Physical review letters},
  volume={103},
  number={4},
  pages={040404},
  year={2009},
  publisher={APS}
}

@article{bendersky2008selective,
  title={Selective and efficient estimation of parameters for quantum process tomography},
  author={Bendersky, Ariel and Pastawski, Fernando and Paz, Juan Pablo},
  journal={Physical review letters},
  volume={100},
  number={19},
  pages={190403},
  year={2008},
  publisher={APS}
}

@article{bendersky2013selective,
  title={Selective and efficient quantum state tomography and its application to quantum process tomography},
  author={Bendersky, Ariel and Paz, Juan Pablo},
  journal={Physical Review A—Atomic, Molecular, and Optical Physics},
  volume={87},
  number={1},
  pages={012122},
  year={2013},
  publisher={APS}
}

@article{austrin2014simple,
  title={A simple deterministic reduction for the gap minimum distance of code problem},
  author={Austrin, Per and Khot, Subhash},
  journal={IEEE Transactions on Information Theory},
  volume={60},
  number={10},
  pages={6636--6645},
  year={2014},
  publisher={IEEE}
}

@book{ledoux2001concentration,
  title={The concentration of measure phenomenon},
  author={Ledoux, Michel},
  number={89},
  year={2001},
  publisher={American Mathematical Soc.}
}

@article{hayden2006aspects,
  title={Aspects of generic entanglement},
  author={Hayden, Patrick and Leung, Debbie W and Winter, Andreas},
  journal={Communications in mathematical physics},
  volume={265},
  number={1},
  pages={95--117},
  year={2006},
  publisher={Springer}
}

@article{huang2021efficient,
  title={Efficient estimation of Pauli observables by derandomization},
  author={Huang, Hsin-Yuan and Kueng, Richard and Preskill, John},
  journal={Physical review letters},
  volume={127},
  number={3},
  pages={030503},
  year={2021},
  publisher={APS}
}

@article{hangleiter2024bell,
  title={Bell sampling from quantum circuits},
  author={Hangleiter, Dominik and Gullans, Michael J},
  journal={Physical Review Letters},
  volume={133},
  number={2},
  pages={020601},
  year={2024},
  publisher={APS}
}

@inproceedings{aaronson2018shadow,
  title={Shadow tomography of quantum states},
  author={Aaronson, Scott},
  booktitle={Proceedings of the 50th annual ACM SIGACT symposium on theory of computing},
  pages={325--338},
  year={2018}
}

@article{montanaro2017learning,
  title={Learning stabilizer states by Bell sampling},
  author={Montanaro, Ashley},
  journal={arXiv preprint arXiv:1707.04012},
  year={2017}
}

@article{leone2024learning,
  title={Learning t-doped stabilizer states},
  author={Leone, Lorenzo and Oliviero, Salvatore FE and Hamma, Alioscia},
  journal={Quantum},
  volume={8},
  pages={1361},
  year={2024},
  publisher={Verein zur F{\"o}rderung des Open Access Publizierens in den Quantenwissenschaften}
}

@article{grewal2025efficient,
  title={Efficient learning of quantum states prepared with few non-Clifford gates},
  author={Grewal, Sabee and Iyer, Vishnu and Kretschmer, William and Liang, Daniel},
  journal={Quantum},
  volume={9},
  pages={1907},
  year={2025},
  publisher={Verein zur F{\"o}rderung des Open Access Publizierens in den Quantenwissenschaften}
}

@article{huang2020predicting,
  title={Predicting many properties of a quantum system from very few measurements},
  author={Huang, Hsin-Yuan and Kueng, Richard and Preskill, John},
  journal={Nature Physics},
  volume={16},
  number={10},
  pages={1050--1057},
  year={2020},
  publisher={Nature Publishing Group UK London}
}

@article{cramer2010efficient,
  title={Efficient quantum state tomography},
  author={Cramer, Marcus and Plenio, Martin B and Flammia, Steven T and Somma, Rolando and Gross, David and Bartlett, Stephen D and Landon-Cardinal, Olivier and Poulin, David and Liu, Yi-Kai},
  journal={Nature communications},
  volume={1},
  number={1},
  pages={149},
  year={2010},
  publisher={Nature Publishing Group UK London}
}
    
    \end{document}